\newcommand{\pr}{{\sc Pr}}
\newcommand {\A}{\mathcal A}
\newcommand{\e}{\mathcal E}
\newcommand {\V}{\mathbf v}
\newcommand{\F}{\mathcal{F}^{(2)}}
\newcommand{\M}{\mathcal{M}^{(2)}}
\newcommand{\rv}{\text{{\sc Rev}}}
\newcommand{\B}{\mathbf{b}}
\newcommand{\p}{\mathbf{P}}
\newcommand{\I}{\mathcal{I}}
\newtheorem{thm}{Theorem}
\newtheorem{theorem}{Theorem}[section]
\newtheorem{lemma}[theorem]{Lemma}
\newtheorem{claim}[theorem]{Claim}
\newtheorem{define}{Definition}
\begin{document}

\title{Constant-Competitive Prior-Free Auction with Ordered Bidders
}

\author{Sayan Bhattacharya\thanks{Department of Computer Science, Duke University, Durham NC 27708. Email: {\tt bsayan@cs.duke.edu}. Supported by NSF awards CCF-0745761 and CCF-1008065.} \and Janardhan Kulkarni\thanks{Department of Computer Science, Duke University, Durham NC 27708. Email: {\tt kulkarni@cs.duke.edu}. Supported by NSF award  IIS-0964560.} \and Xiaoming Xu\thanks{Department of Computer Science, Duke University, Durham NC 27708. Email: {\tt xiaoming@cs.duke.edu}. Supported by NSF award CCF-1008065. }}

\maketitle

\begin{abstract}
A central problem in Microeconomics is to design auctions with good revenue properties. In this setting, the bidders' valuations for the items are private knowledge, but they are drawn from publicly known prior distributions. The goal is to find a truthful auction (no bidder can gain in utility by misreporting her valuation) that maximizes the expected revenue. 

Naturally,  the optimal-auction is sensitive to the prior distributions. An intriguing question is to design a truthful auction that is oblivious to these priors, and yet manages to get a constant factor of the optimal revenue. Such auctions are called {\em prior-free}. 

Goldberg et al. presented a constant-approximate prior-free auction when there are identical copies of an item available in unlimited supply, bidders are unit-demand, and their valuations are drawn from  i.i.d. distributions. The recent work of Leonardi et al. [STOC 2012] generalized this problem to non i.i.d. bidders, assuming that the auctioneer knows the ordering of their reserve prices. Leonardi et al. proposed  a prior-free auction that achieves a $O(\log^* n)$ approximation. We improve upon this result, by giving the first prior-free auction with constant approximation guarantee.  
\end{abstract}

\section{Introduction}

We consider the following problem.  There are multiple bidders  interested in purchasing some items that are being auctioned. Typically, the auction consists of two steps. First, the bidders disclose  their {\em private} valuations.  Next, depending on this input, the items are allocated and every bidder is charged a price. An interesting aspect of this problem is that  a  bidder  is driven by her own selfish interest. As a result, she might misreport her valuation while trying to  manipulate the scheme to her advantage. An auction is  {\em truthful} if it is robust to such manipulations.  Our goal is to find a truthful auction that maximizes the seller's {\em revenue} - the sum of the payments made by the bidders.

It is easy to see that  there is no truthful auction that gives good revenue on every input. This motivates the   {\em Bayesian approach} to Auction Theory: Assume that the private valuation of each bidder is drawn from a publicly known prior distribution, and design a truthful auction that maximizes the {\em expected} revenue. Here, the expectation is over the priors of all the bidders, and the random choices made by the auctioneer.

For example, if we have one item, and one bidder whose valuation is  a random variable $v$,  then the optimal  auction offers the item to the bidder at some price $p$ that maximizes the expected revenue $p \times \pr[v \geq p]$.  This revenue-maximizing price is also known as the bidder's {\em reserve-price}. 

The Bayesian approach can be justified from two different standpoints. (1) A prior  can  be interpreted as a third person's uncertain {\em belief} over the private type of the bidder. (2) Alternately, we can think of a prior to be the {\em average case} scenario,  when the same scheme is executed in multiple occasions and the bidder's type might change over time. This latter framework has an interesting implication, as described below.

Ideally, our auction should not be sensitive to these priors. Instead, the  auction's revenue should be competitive against some appropriate {\em benchmark}, instance by instance. To complete the picture,  we should  guarantee that in expectation over the priors, this benchmark is close to the optimal Bayesian revenue. This will imply that  our auction is {\em prior-free}, and  still performs well on the average.

This paper deals with prior-free auction for {\em digital goods}. In this setting, the bidders are unit-demand, the items are homogeneous and indivisible, and  available in unlimited supply. 

First, consider  the scenario when the  bidders' valuations for the items are drawn from i.i.d. distributions. Here, the optimal Bayesian auction sets the same reserve price (say $q$) for every bidder: If a bidder's valuation is at least $q$, then she takes one copy of the item and makes a payment of $q$, otherwise she receives  zero items and makes zero payment. This leads to  the following prior-free revenue-benchmark. 

Let $\mathcal{F}$ denote the maximum revenue one can get by setting a uniform price across all the bidders. Now, if our revenue  is $O(1)$-competitive against $\mathcal{F}$, instance by instance,  then it is easy to see that (in expectation over the priors) we shall  be $O(1)$-competitive against the optimal Bayesian revenue. Unfortunately, for a technical reason, no truthful auction can compete against the benchmark $\mathcal{F}$ on every input. To see this, suppose that one bidder has infinite valuation, and every other bidder has zero valuation. As a result, the benchmark $\mathcal{F}$ is set at infinity. In contrast, in any truthful auction, the price offered to the first bidder should be independent of her own valuation. Hence, there is no truthful auction whose revenue is comparable to $\mathcal{F}$.

To circumvent this difficulty, Goldberg et al.~\cite{Goldberg-Games2006} proposed the modified benchmark $\F$: It is the maximum revenue one can get by setting a uniform price, under the condition  that this price is at most the second highest valuation (so that at least two bidders purchase the item). This rules out the bad example described in the preceding paragraph. In their seminal work, Goldberg et al.~\cite{Goldberg-Games2006} gave a truthful prior-free auction whose revenue is constant competitive against $\F$ on every input.

More recently, Leonardi et al.~\cite{Tim-Leonardi-stoc2012} extended the landscape of this problem by relaxing the iid assumption. They allowed the bidders' valuations to be drawn from independent, but not necessarily identical, prior distributions: As in the iid case, here the optimal Bayesian auction offers one copy of the item to every bidder at her reserve price. However, in contrast to the iid case,  different bidders may have different reserve prices. Thus, at first glance, the natural prior-free benchmark seems to be the maximum revenue  that can be obtained from any arbitrary pricing. Unfortunately, it is easy to see that this benchmark is equal to the social welfare, and hence, no truthful auction can be competitive against this benchmark. 

 Leonardi et al.~\cite{Tim-Leonardi-stoc2012} argued that in order to get a positive result in the non-iid setting, the auctioneer should be equipped with some information regarding the bidders' priors. They made the following natural assumption: The auctioneer  knows the ordering of the bidders' reserve prices.\footnote{We emphasize that the auctioneer does not know the actual values of these reserves.} Without any loss of generality,  the bidders are ordered according to $1, \ldots, n$, so that for all $i < j$, the reserve price for bidder $i$ is at most the reserve price for bidder $j$. 

In this scenario, the optimal Bayesian revenue is attained by an  increasing price ladder: For any two bidders $i < j$, the price offered to bidder $i$ is at most the price offered to bidder $j$. Due to a technical reason similar to the one described for iid bidders, one needs to impose the further condition that the highest price in this ladder is at most the second highest valuation. The maximum revenue from any such price ladder is termed as the {\em monotone-price benchmark} $\M$. Leonardi et al. gave a truthful auction that is $O(\log^* n)$ competitive against $\M$.

\paragraph{Our Result} {\em We improve upon the work of Leonardi et al.~\cite{Tim-Leonardi-stoc2012}, by giving the first truthful prior-free auction that is $O(1)$ competitive  against the monotone-price benchmark $\M$.}

\paragraph{Remark} The same result has been obtained independently in~\cite{arxiv}.

\subsection{Previous Work}

The Bayesian approach to auction theory was pioneered by Myerson~\cite{myerson}. He described the revenue-optimal auction for selling one item. For auctions involving multiple items,  his result holds  if the bidders' valuation functions (defined over subsets of items) are {\em single-dimensional}.  Several recent papers~\cite{shuchi,shuchi1,sayan,costis1,costis2}  extend  Myerson's work, and algorithmically characterize the revenue-optimal auctions in  multi-dimensional settings.

As mentioned before, Goldberg et al.~\cite{Goldberg-Games2006} were the first to introduce the concept of prior-free auctions for digital goods (see the survey in~\cite{Hartline-Karlin-bookchapt}), and to define the uniform-price benchmark $\F$. Based on a {\em random-paritioning} scheme, they presented a truthful auction that is constant-competitive against $\F$. In this scheme, the auctioneer randomly selects a ``training'' set of bidders, and uses the bids reported by this training set  to come up with an appropriate price-vector, which in turn, specifies the prices at which the item is offered   to the remaining bidders. This work was followed up by a series of papers~\cite{alei,ichiba,feige}. At present, the auction of Ichiba et al.~\cite{ichiba} has the best known competitive ratio of $3.12$ against the benchmark $\F$. 

The monotone-price benchmark $\M$ considered in~\cite{Tim-Leonardi-stoc2012} follows from the work of Hartline et al.~\cite{Tim-Hartline-stoc2008}, who  laid the foundation for deriving suitable prior-free revenue-benchmarks for a wide variety of settings.

Another research direction that has been studied extensively   is the topic of  {\em prior-independence}. This line of work is similar to the prior-free auctions, in the sense that the auctioneer is not aware of  the distributions from which the bidders' private valuations are drawn. However, in contrast to the prior-free setting, here the auction does not compete against some worst-case revenue-benchmark. Instead,  the prior distributions are used to directly analyze the auction's expected revenue. We refer the reader to the papers~\cite{Devanur-WINE11-PriorIndependent, Tim-PriorIndenpent-EC11} for more details.

\section{Notations and Preliminaries}

There are a set $\I = \{1, \ldots, n\}$ of  unit-demand bidders, and an indivisible item which is available in unlimited supply.  Bidder $i \in \I$ has a private valuation $v_i$ for the item.  The notation $\V= (v_1, \ldots, v_n)$  represents the valuation-profile of the bidders.

In an auction, the bidders  first disclose their  valuations. We use the symbol $b_i$  for the reported valuation of bidder $i \in \I$, which is called her ``bid''. Depending on the input bid-vector  $ \mathbf{b} = (b_1, \ldots, b_n)$, the auctioneer  allocates the items and every bidder is charged a  price. 

Let $X_i(\mathbf{b}) \in \{0,1\}$ be the indicator variable corresponding to the allocation of bidder $i \in \I$, and let $p_i(\mathbf{b})$ denote her payment. The auctioneer's revenue is equal to $\sum_{i \in \I} p_i(\mathbf{b})$.

The utility of bidder $i \in \I$   is $u_i(\mathbf{b}) = v_i \cdot X_i(\mathbf{b}) - p_i(\mathbf{b}).$ 
The notation $\mathbf{b}_{-i}$ represents the profile of all bids  except that of bidder  $i \in \I$. In a truthful auction, we have:
$$ u_i(\mathbf{b}_{-i},v_i) \geq \max \left\{0, u_i(\mathbf{b}_{-i},b_i)\right\}  \ \mbox{ for all } i, \mathbf{b}_{-i}, b_i, v_i.$$
In other words, an auction is truthful iff no bidder can gain in utility by misreporting her  valuation.  

A randomized auction is a probability distribution over deterministic auctions. It is truthful iff all the deterministic auctions in its support are truthful. 

In this paper, we  only consider truthful auctions. Hence, without any loss of generality, every bidder's reported bid coincides with her private valuation, and we  use the notations $\mathbf{b}$ and $\mathbf{v}$ interchangeably.

For any subset of bidders $S \subseteq \I$,  let $b^{(2)}_S$ denote the second highest bid in the set $\{ b_i : i \in S\}$.  A {\em price vector} $\p_{S}$   offers the item at price $\mathbf{P}_S(i)$ to every bidder $i \in S$. 

\begin{define}
\label{def:uniform} 
A price vector $\mathbf{P}_S$  over $S \subseteq \I$ is {\em uniform} iff
\begin{itemize} 
\item  For all  $i, j \in S$, we have $\p_{S}(i) = \p_S(j) = q$ (say).  
\item Furthermore, we have $q \leq b^{(2)}_{S}$.
\end{itemize}
\end{define}

\begin{define}
\label{def:monotone}
A price vector $\p_S$  over $S \subseteq \I$  is {\em monotone} iff
\begin{itemize}
\item We have $\mathbf{P}_S(i) \leq  \mathbf{P}_S(j)$ whenever $i < j$ and $i, j \in S$.
\item Furthermore, we have $\max_{i \in S} \left\{ \mathbf{P}_S(i) \right\} \leq b^{(2)}_S.$
\end{itemize}
\end{define}

Note that the concept of a monotone price vector  implies an underlying ordering of the set of  bidders $\I = \{1, \ldots, n\}$. For ease of notation, we  assume that bidder $i \in \I$ comes before bidder $j \in \I$ in this ordering iff $i < j$. The prices  must respect this ordering, meaning that if bidder $i$ is offered a price, then it should be  lower than  the price offered to every bidder $j > i$. In contrast, the valuations of the bidders can be arbitrary, so that we might have $b_i > b_j$ for two bidders $j > i$.


In a price vector $\mathbf{P}_S$ defined over $S \subseteq \I$, a bidder $i \in S$ takes the item iff she gets nonnegative utility (i.e., iff $b_i \geq \mathbf{P}_S(i)$). Thus, the revenue from $\p_{S}$ is given by:
$$\text{{\sc Rev}}(\mathbf{P}_S) = \sum_{i \in S \, : \, b_i \geq \mathbf{P}_S(i)} \mathbf{P}_S(i).$$

We are now ready to define the  prior-free revenue-benchmarks.

\begin{define}
\label{def:benchmarks}
 The {\em uniform-price benchmark} $\F_S$ is the maximum revenue  from a  uniform price vector defined over $S \subseteq \I$.
$$\F_S = \max_{\p_S \text{ is uniform}} \rv(\p_S).$$
\end{define}

\begin{define}
\label{def:benchmark:monotone}
The {\em monotone-price benchmark} $\M_S$ is  the maximum revenue  from a  monotone price vector  defined over $S \subseteq \I$.
$$\M_S = \max_{\p_S \text{ is monotone}} \rv(\p_S).$$
\end{define}

We  present a truthful auction whose revenue is at least a constant fraction of the monotone-price benchmark $\M_{\I}$. We do not  try to optimize the value of this constant, which is left as an interesting open question.



\section{The Auction}

Suppose that we have partitioned the bidders into two subsets $S \subseteq \I$ and $T = \I \setminus S$. Consider a price vector $\p_S$ (resp. $\p_T$) defined over the subset $S$ (resp. $T$).  We say that $\p_T$ is the {\em extension} of $\p_S$ if and only if for all $i \in T$, we have: 
\begin{eqnarray}
\p_T(i) = \begin{cases} \max_{j \in S \, : \, j < i} \p_ S(j) & \text{if } \{ j \in S : j < i \} \neq \emptyset ;\\
0 & \text{otherwise.}
\end{cases} 
\end{eqnarray} 
Let $Q_{2}$ denote the set of prices that are in powers of $2$, that is, $Q_{2} = \{ 2^{t} : t \in \mathbb{Z} \}$, where $\mathbb{Z}$ is the set of all integers. We say that a price vector $\p_S$ is {\em discretized} if and only if $\p_S(i) \in Q_{2}$ for all agents $i \in S$.

Our auction  is based on the  random partitioning framework, and it is described in Figure~\ref{fig:auction}.

\begin{figure}[htbp]
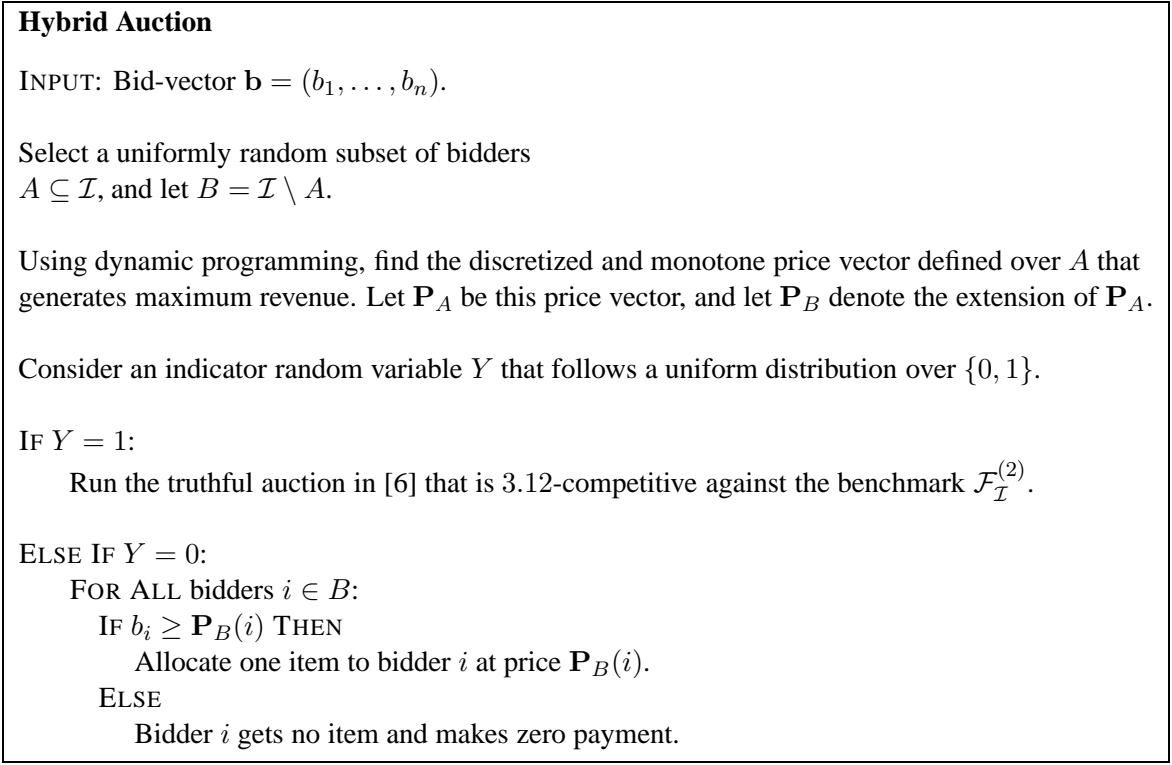

\centerline{\framebox{
\begin{minipage}{5in}
{\bf Hybrid Auction}
\begin{tabbing}
\= {\sc Input:} Bid-vector $\mathbf{b} = (b_1, \ldots, b_n)$.  \\ \\
\>     Select a uniformly random subset of bidders  \\
\>  $A \subseteq \I$, and  let $B = \I \setminus A$.  \\ \\
\>  Using dynamic programming, find the discretized    and monotone price vector defined over $A$ that \\
\>   generates maximum revenue. Let $\mathbf{P}_A$ be this price    vector, and let $\mathbf{P}_B$ denote the extension of $\mathbf{P}_A$. \\ \\
\> Consider an indicator random variable $Y$ that   follows a uniform distribution over $\{0,1\}$. \\ \\   
\> {\sc If}  $Y = 1$: \\ 
\>  \ \ \ \ \ \ \ \= Run the truthful auction in~\cite{ichiba} that is   $3.12$-competitive against the benchmark $\F_{\I}$. \\ \\
\>  {\sc Else If} $Y = 0$:  \\ 
\> \>   {\sc For All}  bidders $i \in B$: \\
\> \>  \ \ \ \ \= {\sc If}  $b_i \geq \mathbf{P}_B(i)$ {\sc Then} \\
\>  \>  \> \ \ \ \ \ \= Allocate one  item to bidder $i$ at price $\mathbf{P}_B(i)$. \\
\>  \> \> {\sc Else} \\
\>  \> \> \> Bidder $i$  gets no item and makes zero payment.
  \end{tabbing}
\end{minipage}
}}
\caption{\label{fig:auction} A truthful auction that is constant competitive against the monotone price benchmark $\M_{\I}$.} 
\end{figure}

 With probability $1/2$, we run a truthful auction whose revenue is within a constant fraction of the uniform-price benchmark $\F_{\I}$. With the remaining probability, we run the general scheme: Include every bidder $i \in \I$ in subset $A$ with probability $1/2$, independently of the other bidders. Next,  find the revenue maximizing discretized monotone price vector $\p_A$, and apply the extension of $\p_A$ to the bidders in $B$.

\begin{thm}
The Hybrid Auction (see Figure~\ref{fig:auction}) is truthful.
\end{thm}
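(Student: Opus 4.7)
The plan is to exploit the standard fact that a randomized mechanism is truthful if and only if it is a distribution over truthful deterministic mechanisms, so it suffices to argue truthfulness conditional on each setting of the internal random bits: the bit $Y$, and (when $Y=0$) the random partition $(A,B)$.

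First I would handle the $Y=1$ branch trivially by appealing to the fact that the auction of Ichiba et al.~\cite{ichiba} is itself truthful, hence executing it verbatim cannot create a profitable deviation for any bidder, no matter what the realized partition is.

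For the $Y=0$ branch, condition further on an arbitrary realization of the partition $(A,B)$; since this partition is chosen uniformly at random and independently of the bid vector $\mathbf{b}$, no bidder can influence whether she lands in $A$ or in $B$ by misreporting. If $i \in A$, then by construction the auction allocates nothing to $i$ and charges her nothing, so her utility is identically $0$ for every report $b_i$, and truthfulness for bidder $i$ is automatic. If $i \in B$, the key observation is that the price $\mathbf{P}_B(i)$ offered to $i$ is determined entirely by the discretized monotone price vector $\mathbf{P}_A$ computed on $A$, which in turn depends only on $\mathbf{b}_{A}$ and therefore is independent of $b_i$. Hence bidder $i$ faces a take-it-or-leave-it price $p = \mathbf{P}_B(i)$ that is fixed before she reports; she accepts (and enjoys utility $v_i - p$) iff her reported bid $b_i$ satisfies $b_i \ge p$. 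It is a standard fact that under any fixed take-it-or-leave-it price, reporting $b_i = v_i$ maximizes $u_i$: if $v_i \ge p$ truthful reporting yields $v_i - p \ge 0$ which beats the alternative of $0$, and if $v_i < p$ truthful reporting yields $0$ which beats the alternative of $v_i - p < 0$.

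The main (and only) obstacle worth flagging is making the independence argument precise: one must verify that the dynamic program used to build $\mathbf{P}_A$ reads no bid from $B$, and that the extension rule defining $\mathbf{P}_B$ uses only $\mathbf{P}_A$ and the identities (indices) of the bidders in $B$, but never their reported values $b_i$ for $i \in B$. Once that bookkeeping is in place, combining the two branches gives that every deterministic auction in the support is truthful, and therefore the Hybrid Auction is truthful.
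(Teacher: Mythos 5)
Your proposal is correct and follows essentially the same approach as the paper's proof: condition on the internal randomness (the bit $Y$ and the partition $(A,B)$), note that the $Y=1$ branch inherits truthfulness from the Ichiba et al.\ auction, and for $Y=0$ observe that bidders in $A$ receive zero utility while bidders in $B$ face a take-it-or-leave-it price computed solely from $\mathbf{b}_A$ and hence independent of their own reports. The paper states this more tersely but relies on the same decomposition (it records the needed fact that a randomized auction is truthful iff every deterministic auction in its support is truthful, in the preliminaries).
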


\begin{proof}
We only need to show that the auction is truthful when we run the general scheme, i.e. when $Y = 0$. The random subset $A \subseteq \I$ is chosen independently of the input bid-vector $\B$. If a bidder $i \in \I$ is included in the subset $A$, then she gets zero utility. On the other hand, if she is included in the subset $B$, then she is offered the item at a price that is independent of her own bid. The theorem follows. 
    \end{proof}

\subsection{Revenue Guarantee}

To show that our auction is $O(1)$-competitive against the benchmark $\M_{\I}$, we shall  consider two  cases. 

Case 1.  The ratio $\F_{\I}/\M_{\I}$ is at least some constant. Note that with  probability $1/2$, we execute an auction  which is $O(1)$-competitive against $\F_{\I}$. Hence, in this case, our  revenue is clearly within a constant factor of $\M_{\I}$.

Case 2. The ratio $\F_{\I}/\M_{\I}$ is very small. If this is the case, then we  prove that   the expected value of  $\rv(\p_B)$ is within a constant factor of $\M_{\I}$. Note that with   probability $1/2$, we run the general scheme whose revenue is given by $\rv(\p_B)$. Hence, we remain $O(1)$-competitive against $\M_{\I}$.

To carry out the plan described above, we need to introduce some definitions.

\begin{define}
\label{def:label}
For any price $q \in Q_{2}$ and any  integer $l \geq 0$, we say that $q$ is a level-$l$-price iff 
$$\M_{\I}/2^{l+1} < q \leq \M_{\I}/2^l.$$
\end{define}

Since the upper and lower limits of a level differ by a factor of $2$,  there is exactly one level-$l$-price in $Q_{2}$. Throughout  this paper, we shall use the symbol $q_l$ to denote this unique level-$l$-price in $Q_2$.

\begin{define}
\label{def:triple}
Consider an ordered pair of bidders $i < j$, and the level-$l$-price $q_l \in Q_{2}$. If both the bidders'   valuations are larger than the price (that is, $b_i \geq q_l$ and $b_j \geq q_l$), then we say that $(i,j,q_l)$ is a level-$l$-triple.  
\end{define}

The concept of a triple uses the underlying ordering of the set  $\I$. Accordingly, a bidder $k \in \I$ belongs to the triple $(i,j,q_l)$ iff $i \leq k \leq j$.  A bidder in a triple $(i,j,q_l)$ is {\em winning} iff her valuation is larger than the price $q_l$. 

\begin{define}
\label{def:win}
The set of winning bidders in a triple $(i,j,q_l)$ is defined as: $$W_{ijq_l} = \{ k \in \I : i \leq k \leq j  \text{ and } b_k \geq q_l\}.$$ 
\end{define}

We say that a triple  is {\em balanced} iff its winning bidders  are evenly partitioned among the random subsets $A, B$.
\begin{define}
\label{def:balance}
A triple $(i,j,q_l)$ is balanced iff $$\frac{1}{3}\times |W_{ijq_l}| \leq |A \cap W_{ijq_l}| \leq \frac{2}{3} \times |W_{ijq_l}|.$$
\end{define}

Finally, we say that a triple is {\em large} if it contains sufficiently many winning bidders.

\begin{define}
\label{def:large}
A level-$l$-triple $(i,j,q_l)$ is large iff   $$|W_{ijq_l}| \geq 288 l.$$
\end{define}

The rest of the paper is organized as follows. In Section~\ref{subsec:events}, we show that certain important events occur with constant probability. In Section~\ref{subsec:analysis}, we show that conditioned on these important events, our auction  generates good revenue.

\subsubsection{Important Events}
\label{subsec:events}

Let $\e_1(\B)$ denote the event that $\text{{\sc Rev}}(\p_A) \geq \M_{\I}/6$.   Let $\e_2(l,\B)$ denote the event that every large level-$l$-triple is balanced. Define the  event $\e_2(\B)$ as follows. 
\begin{equation}
\label{eq:secondevent}
\e_2(\B) = \bigcap_{l \geq 24} \e_2(l,\B)
\end{equation}

We shall show that both the events $\e_1(\B)$ and $\e_2(\B)$ occur simultaneously with probability at least $1/32$ (see Theorem~\ref{cor:main}).

\begin{lemma}
\label{thm:good:1}
We have: $\pr\left[\e_1(\B)\right]  \geq \frac{1}{16}.$
\end{lemma}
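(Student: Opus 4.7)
\medskip
\noindent The plan is to compare $\rv(\p_A)$ against the restriction to $A$ of the \emph{non-discretized} revenue-optimal monotone vector on $\I$, and to defer the factor-$2$ loss from discretization to the very last step. Let $\p^*$ be a monotone price vector on $\I$ achieving $\rv(\p^*) = \M_\I$, with maximum price $p^*_{\max} = \max_i \p^*(i)$; validity of $\p^*$ on $\I$ gives $p^*_{\max} \leq b^{(2)}_\I$. Let $j_1, j_2$ be the bidders holding the largest and second-largest bid values in $\I$, so $b_{j_1}, b_{j_2} \geq p^*_{\max}$. Define the random variable
$$R_A := \sum_{i \in A \,:\, b_i \geq \p^*(i)} \p^*(i),$$
the revenue obtained by applying $\p^*$ only to bidders in $A$, and let $E$ be the event $\{j_1 \in A \text{ and } j_2 \in A\}$. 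Since bidders land in $A$ independently with probability $1/2$, $\pr[E] = 1/4$; and under $E$ we have $b^{(2)}_A \geq b_{j_2} \geq p^*_{\max}$, so the restriction $\p^*|_A$ is a valid monotone price vector on $A$ and $R_A = \rv(\p^*|_A)$.

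\medskip
\noindent Conditional on $E$, the bidders $j_1, j_2$ are winners of $\p^*$ (their bids exceed $p^*_{\max} \geq \p^*(j_k)$) and contribute $\p^*(j_1) + \p^*(j_2)$ to $R_A$ deterministically, while every other winner of $\p^*$ appears in $A$ independently with probability $1/2$. Hence
$$\E[R_A \mid E] \;=\; \p^*(j_1) + \p^*(j_2) + \tfrac{1}{2}\bigl(\M_\I - \p^*(j_1) - \p^*(j_2)\bigr) \;\geq\; \M_\I/2.$$
Since $0 \leq R_A \leq \M_\I$, a reverse Markov inequality gives
$$\pr\bigl[R_A \geq \M_\I/3 \;\big|\; E\bigr] \;\geq\; \frac{\M_\I/2 - \M_\I/3}{\M_\I - \M_\I/3} \;=\; \frac{1}{4},$$
so the event $F := E \cap \{R_A \geq \M_\I/3\}$ satisfies $\pr[F] \geq (1/4)(1/4) = 1/16$. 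On $F$, $\p^*|_A$ is a valid monotone vector on $A$ with revenue at least $\M_\I/3$, which yields $\M_A \geq \M_\I/3$.

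\medskip
\noindent To convert $\M_A$ into a bound on the discretized optimum $\rv(\p_A)$, take the revenue-maximizing (non-discretized) monotone vector on $A$ and round each of its prices down to the nearest power of $2$. The result is discretized and monotone; every winning bidder of the original still wins (prices only decrease) and pays at least half as much, so this vector attains revenue $\geq \M_A/2$. Hence $\rv(\p_A) \geq \M_A/2 \geq \M_\I/6$ on $F$, and therefore $\pr[\e_1(\B)] \geq \pr[F] \geq 1/16$. The main obstacle is enforcing the second-highest-bid validity constraint $\max_{i \in A} \p^*(i) \leq b^{(2)}_A$, which is the reason for conditioning on $E$; a secondary subtlety is that discretizing $\p^*$ up front, rather than applying discretization only when comparing $\M_A$ to $\rv(\p_A)$ at the end, would insert an extra factor of $2$ into the reverse Markov step and weaken the final bound to roughly $1/40$.
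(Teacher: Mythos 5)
Your proof is correct and has the same two-step structure as the paper's: establish that $\M_A \geq \M_\I/3$ with probability at least $1/16$, then use $\rv(\p_A) \geq \M_A/2$ (the loss from rounding prices down to powers of two). The paper obtains the first step by citing Lemma~3.2 of Leonardi and Roughgarden~\cite{Tim-Leonardi-stoc2012}, whereas you supply a self-contained derivation: condition on the top two bidders of $\I$ landing in $A$ (which both guarantees $b^{(2)}_A \geq p^*_{\max}$, making $\p^*|_A$ a valid monotone price vector on $A$, and forces $\E[R_A \mid E] \geq \M_\I/2$), then apply a reverse Markov inequality with the upper bound $R_A \leq \M_\I$. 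The constants $1/4 \cdot 1/4 = 1/16$ at threshold $\M_\I/3$ fall out exactly as in the citation. Your closing remark is also on point: discretizing $\p^*$ before the probabilistic argument weakens the conditional expectation bound to $\M_\I/4$, and the reverse Markov step then gives roughly $1/40$ rather than $1/16$, so deferring the factor-$2$ loss to the very end is genuinely needed to match the stated probability.
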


\begin{proof}
Leonardi et al. proved that $\M_{\A} \geq \M_{\I}/3$ with probability at least $1/16$ (see Lemma~3.2 in~\cite{Tim-Leonardi-stoc2012}). Since $\p_A$ is the discretized monotone price vector with maximum revenue, we have $\rv(\p_A) \geq \M_{\A}/2$. The lemma follows.
 \end{proof}

\begin{claim}
\label{lm:triple}
For every integer $l \geq 0$, the number of level-$l$-triples is at most $2^{2l+2}$.
\end{claim}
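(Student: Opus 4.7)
The plan is to bound the number of level-$l$-triples by first bounding the number of bidders whose bid is at least $q_l$, and then counting ordered pairs.

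First I would observe that for a fixed level $l$, the third coordinate of a level-$l$-triple is forced to be $q_l$ (this is already noted in the excerpt: $q_l$ is the unique level-$l$-price in $Q_2$). So I only need to count ordered pairs $(i,j)$ with $i<j$, $b_i \geq q_l$, and $b_j \geq q_l$. Let $m = |\{k \in \I : b_k \geq q_l\}|$. The number of level-$l$-triples is exactly $\binom{m}{2}$.

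Next I would bound $m$. If $m \leq 1$, then $\binom{m}{2} = 0 \leq 2^{2l+2}$ and we are done, so assume $m \geq 2$. In that case, $q_l \leq b^{(2)}_{\I}$, so the constant price vector that offers price $q_l$ to every bidder in $\I$ is uniform in the sense of Definition~\ref{def:uniform}. Since every uniform price vector is also monotone (Definition~\ref{def:monotone}), and since this price vector extracts revenue exactly $m \cdot q_l$ (each of the $m$ bidders with $b_k \geq q_l$ buys at price $q_l$), we obtain
\[
m \cdot q_l \;\leq\; \F_{\I} \;\leq\; \M_{\I}.
\]
Using Definition~\ref{def:label}, $q_l > \M_{\I}/2^{l+1}$, so $m < 2^{l+1}$.

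Combining these, the number of level-$l$-triples is
\[
\binom{m}{2} \;\leq\; \binom{2^{l+1}}{2} \;=\; \frac{2^{l+1}(2^{l+1}-1)}{2} \;<\; \frac{2^{2l+2}}{2} \;=\; 2^{2l+1} \;\leq\; 2^{2l+2},
\]
which is the claimed bound. There is no real obstacle here; the only subtlety is to notice that a uniform price vector is automatically monotone, so the uniform-price revenue at $q_l$ (when at least two bidders bid above it) is dominated by $\M_{\I}$, which lets the bound on $q_l$ from the level definition translate into a bound on the number of winners.
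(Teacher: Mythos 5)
Your proof is correct and follows essentially the same route as the paper's, which argues (more tersely) that any bidder in a level-$l$-triple has bid exceeding $\M_{\I}/2^{l+1}$ and there can be at most $2^{l+1}$ such bidders, then squares that count. You usefully spell out the step the paper leaves implicit — namely that if more than $2^{l+1}$ bidders bid above $q_l$, the constant price vector at $q_l$ is uniform (hence monotone) and would extract revenue exceeding $\M_{\I}$, a contradiction — and you use $\binom{m}{2}$ in place of the paper's looser $m^2$, but these are refinements of the same argument, not a different one.
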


\begin{proof}
  Consider a bidder $k$ whose valuation $b_k$ is at least $q_l$. Since  $q_l > \M_{\I}/2^{l+1}$, we infer that $b_k > \M_{\I}/2^{l+1}$. Thus, there are at most $2^{l+1}$ such bidders. Since a level-$l$-triple $(i,j,q_l)$ is uniquely determined by two bidders $i < j$ having valuations at least $q_l$,  it is easy to see that there can be at most $(2^{l+1})^2 = 2^{2l+2}$ level-$l$-triples. 
    \end{proof}

We shall use the following version of the Chernoff bound~\cite{Motvani-Raghavan}.

\begin{theorem}
\label{thm:chernoff}
Let $T_1, \ldots, T_m$ be i.i.d random variables such that $T_i \in \{0,1\}$ for all $i \in \{1, \ldots, m\}$. Define their sum as $T = \sum_{i=1}^m T_i$, and let $\mu = E[T]$. For all $0 < \delta < 1$:
$$\pr[(1-\delta) \mu \leq T \leq (1+\delta) \mu] \geq 1 - 2 \times \exp\left(-\frac{\mu \delta^2}{4}\right).$$
\end{theorem}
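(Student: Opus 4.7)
The plan is to use the standard Chernoff moment-generating-function argument, applied separately to the upper and lower tails, and then combined by a union bound. For the upper tail I write, for arbitrary $t > 0$,
$$\pr[T \geq (1+\delta)\mu] = \pr[e^{tT} \geq e^{t(1+\delta)\mu}] \leq e^{-t(1+\delta)\mu}\,\E[e^{tT}]$$
by Markov's inequality. Independence factorizes the MGF: with $p = \mu/m$, each term satisfies $\E[e^{tT_i}] = 1 - p + p e^t \leq e^{p(e^t - 1)}$ (via $1+x \leq e^x$), and hence $\E[e^{tT}] \leq \exp(\mu(e^t - 1))$. Optimizing at $t = \ln(1+\delta)$ gives the classical bound
$$\pr[T \geq (1+\delta)\mu] \leq \exp\!\bigl(-\mu\,\phi(\delta)\bigr), \qquad \phi(\delta) := (1+\delta)\ln(1+\delta) - \delta.$$

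Next I would reduce this $\phi$-form to the stated $\delta^2/4$ form via the elementary inequality $\phi(\delta) \geq \delta^2/4$ on $(0,1)$, which follows by checking that $f(\delta) := \phi(\delta) - \delta^2/4$ satisfies $f(0) = 0$ and $f'(\delta) = \ln(1+\delta) - \delta/2 \geq 0$ on $[0,1]$ (itself an easy one-derivative check, since $1/(1+\delta) \geq 1/2$ for $\delta \leq 1$). This yields $\pr[T \geq (1+\delta)\mu] \leq \exp(-\mu\delta^2/4)$.

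For the lower tail I would run the symmetric argument: apply Markov's inequality to $e^{-tT}$ for $t > 0$, factorize the MGF in the same way, and optimize at $t = -\ln(1-\delta)$ to obtain the (strictly stronger) bound $\pr[T \leq (1-\delta)\mu] \leq \exp(-\mu\delta^2/2) \leq \exp(-\mu\delta^2/4)$. A union bound over the two tails then gives
$$\pr[(1-\delta)\mu \leq T \leq (1+\delta)\mu] \geq 1 - 2\exp(-\mu\delta^2/4),$$
which is exactly the claimed inequality.

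The only real obstacle is the elementary calculus inequality $\phi(\delta) \geq \delta^2/4$; everything else is a routine application of Markov and independence. Since this is a textbook concentration bound (and is in fact cited directly from Motwani--Raghavan in the paper rather than reproved), no new idea is required beyond careful bookkeeping of constants in the exponent, so that the weaker of the two tail estimates delivers the stated factor of $4$ in the denominator.
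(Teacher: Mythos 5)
Your proof is correct: the MGF/Markov argument for both tails, the reduction $\phi(\delta)\geq\delta^2/4$ via checking $f(0)=f'(0)=0$ and $f''(\delta)=1/(1+\delta)-1/2\geq 0$ on $[0,1]$, the stronger $\delta^2/2$ lower-tail bound, and the final union bound are all sound. The paper itself does not prove this lemma---it cites it directly from Motwani--Raghavan---so there is no in-paper argument to compare against; your derivation is the standard textbook one and matches what that reference would give.
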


\begin{claim}
\label{lm:triple:prob}
We have: $$\pr[\e_2(l,\B)] \geq 1 - 1/2^{l}, \ \  \text{ for all } l \geq 24.$$
\end{claim}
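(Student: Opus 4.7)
The plan is to fix a large level-$l$-triple, apply the Chernoff bound from Theorem~\ref{thm:chernoff} to bound the probability that it fails to be balanced, and then union-bound over all level-$l$-triples using the count from Claim~\ref{lm:triple}. The ``large'' threshold $|W_{ijq_l}| \geq 288l$ in Definition~\ref{def:large} is tuned precisely so that the tail bound beats $2^{-2l-2}$ with room to spare.

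More concretely, fix a large level-$l$-triple $(i,j,q_l)$, set $m = |W_{ijq_l}| \geq 288 l$, and let $T = |A \cap W_{ijq_l}|$. Since each bidder in $W_{ijq_l}$ is placed in $A$ independently with probability $1/2$, $T$ is a sum of $m$ i.i.d.\ Bernoulli$(1/2)$ variables with mean $\mu = m/2$. The triple is balanced iff $T$ lies in $[m/3, 2m/3] = [(1-1/3)\mu, (1+1/3)\mu]$, so applying Theorem~\ref{thm:chernoff} with $\delta = 1/3$ gives
\[
\pr[(i,j,q_l) \text{ not balanced}] \;\leq\; 2\exp\!\left(-\frac{\mu \delta^2}{4}\right) \;=\; 2\exp\!\left(-\frac{m}{72}\right) \;\leq\; 2\exp(-4l),
\]
where the last inequality uses $m \geq 288 l$.

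Next, I would apply a union bound over \emph{all} level-$l$-triples (not only the large ones; small triples contribute nothing to $\e_2(l,\B)^c$). By Claim~\ref{lm:triple}, there are at most $2^{2l+2}$ such triples, so
\[
\pr[\e_2(l,\B)^c] \;\leq\; 2^{2l+2} \cdot 2 \exp(-4l) \;=\; 2^{2l+3} e^{-4l}.
\]
It then remains to check the elementary inequality $2^{2l+3} e^{-4l} \leq 2^{-l}$, equivalently $(3l+3)\ln 2 \leq 4l$, which rearranges to $l \geq 3\ln 2 / (4 - 3\ln 2) \approx 1.08$ and so certainly holds for all $l \geq 24$.

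I do not anticipate any real obstacle: the proof is essentially a Chernoff-plus-union-bound calculation, and the constants $288$ in Definition~\ref{def:large} and $24$ in the range of $l$ are calibrated so that the geometric decay $e^{-4l}$ dominates the combinatorial growth $2^{2l+2}$ of the number of triples. The only point to be slightly careful about is that the union bound must range over all level-$l$-triples, and that ``not balanced'' for a small triple is harmless because $\e_2(l,\B)$ only constrains large ones; using the (larger) count $2^{2l+2}$ anyway keeps the argument clean.
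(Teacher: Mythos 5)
Your proof is correct and takes essentially the same route as the paper: fix a large level-$l$-triple, apply the stated Chernoff bound with $\delta = 1/3$ and $\mu = |W_{ijq_l}|/2 \geq 144l$ to get a failure probability of $2e^{-4l}$, and union-bound over the at most $2^{2l+2}$ level-$l$-triples from Claim~\ref{lm:triple}. You merely spell out the arithmetic (e.g., $\mu\delta^2/4 = m/72 \geq 4l$ and the elementary check that $2^{2l+3}e^{-4l} \leq 2^{-l}$) that the paper leaves implicit.
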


\begin{proof}
Fix any large level-$l$-triple $(i,j,q_l)$. By definition, the number of winning bidders in $(i,j,q_l)$ is at least $288l$. Since each of these bidders is included in the set $A$ independently and uniformly at random,  Theorem~\ref{thm:chernoff} implies that  the triple $(i,j,q_l)$ is not balanced with probability at most $2/e^{4l}$. By Claim~\ref{lm:triple}, there are at most $2^{2l+2}$ level-$l$-triples. Applying union bound,  the probability that some level-$l$-triple is not balanced is at most $2^{2l+2} \times 2 /e^{4l} \leq 1/2^l$,   for  $l \geq 24$. 
    \end{proof}

\begin{lemma}
\label{thm:good:2}
We have: $\pr[\e_2(\B)] \geq 31/32$.
\end{lemma}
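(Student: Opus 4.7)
The plan is to prove Lemma~\ref{thm:good:2} by a straightforward union bound over the countable family of events defining $\e_2(\B)$, combined with the geometric-series bound on the probabilities provided by Claim~\ref{lm:triple:prob}.

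First I would recall from~\eqref{eq:secondevent} that $\e_2(\B) = \bigcap_{l \geq 24} \e_2(l, \B)$, so by De Morgan
\[
\pr[\neg \e_2(\B)] \;=\; \pr\!\left[\bigcup_{l \geq 24} \neg \e_2(l,\B)\right] \;\leq\; \sum_{l \geq 24} \pr[\neg \e_2(l,\B)].
\]
Then I would invoke Claim~\ref{lm:triple:prob}, which gives $\pr[\neg \e_2(l,\B)] \leq 1/2^l$ for every $l \geq 24$. Summing the resulting geometric series yields
\[
\sum_{l \geq 24} \frac{1}{2^l} \;=\; \frac{1}{2^{23}},
\]
so $\pr[\e_2(\B)] \geq 1 - 1/2^{23}$, which is comfortably larger than $31/32$.

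There is essentially no obstacle here: the heavy lifting was already done in Claim~\ref{lm:triple:prob}, where the Chernoff bound was calibrated against the $2^{2l+2}$ possible level-$l$-triples so that the failure probability decays geometrically in $l$. The threshold $l \geq 24$ appearing in the definition of $\e_2(\B)$ is precisely what makes the tail sum small, and the constant $31/32$ in the statement is a loose-enough bound that no further tightening is required. I would just need to be careful to state explicitly that the events $\e_2(l,\B)$ are considered only for integers $l \geq 24$, matching the indexing in~\eqref{eq:secondevent}, and to note that $\sum_{l=24}^{\infty} 2^{-l} = 2^{-23} \leq 1/32$ to conclude.
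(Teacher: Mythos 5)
Your proof matches the paper's: a union bound over $l \geq 24$ combined with Claim~\ref{lm:triple:prob}'s bound $\pr[\neg\e_2(l,\B)] \leq 2^{-l}$, giving $\sum_{l\geq 24} 2^{-l} = 2^{-23} \leq 1/32$. The only difference is that you compute the geometric sum exactly ($2^{-23}$) before relaxing to $1/32$, whereas the paper states the $\leq 1/32$ bound directly; the argument is identical.
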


\begin{proof}
Applying union-bound, we infer that
$$1 - \pr[\e_2(\B)]  \leq  \sum_{l \geq 24} (1 - \pr[\e_2(l,\B)]) 
 \leq  \sum_{l \geq 24} \frac{1}{2^l} \leq \frac{1}{32}.$$
    \end{proof}

\begin{theorem}
\label{cor:main}
We have: $\pr[\e_1(\B) \cap \e_2(\B)] \geq 1/32$.
\end{theorem}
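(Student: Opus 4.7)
The plan is to derive the theorem directly from the two preceding lemmas by a straightforward union-bound argument on the complementary events. The key observation is that Lemma~\ref{thm:good:1} gives a lower bound on $\pr[\e_1(\B)]$ and Lemma~\ref{thm:good:2} gives a lower bound on $\pr[\e_2(\B)]$, which together translate into upper bounds on the probabilities of their complements.

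Concretely, I would first write
\[
\pr[\e_1(\B) \cap \e_2(\B)] = 1 - \pr[\overline{\e_1(\B)} \cup \overline{\e_2(\B)}] \geq 1 - \pr[\overline{\e_1(\B)}] - \pr[\overline{\e_2(\B)}],
\]
using elementary inclusion--exclusion (or the union bound). Next, I would plug in the bounds from Lemma~\ref{thm:good:1} and Lemma~\ref{thm:good:2}:
\[
\pr[\overline{\e_1(\B)}] \leq 1 - \frac{1}{16} = \frac{15}{16}, \qquad \pr[\overline{\e_2(\B)}] \leq 1 - \frac{31}{32} = \frac{1}{32}.
\]
Substituting these in gives
\[
\pr[\e_1(\B) \cap \e_2(\B)] \geq 1 - \frac{15}{16} - \frac{1}{32} = \frac{32 - 30 - 1}{32} = \frac{1}{32},
\]
as claimed.

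There is essentially no obstacle here: the lemmas have already done all of the probabilistic work (the Chernoff-based analysis of balance in Claim~\ref{lm:triple:prob} and the appeal to Leonardi et al.\ for the revenue guarantee in Lemma~\ref{thm:good:1}). The only minor subtlety worth flagging is that the two events $\e_1(\B)$ and $\e_2(\B)$ are not claimed to be independent---indeed, both depend on the random partition of bidders into $A$ and $B$---so one should explicitly note that independence is \emph{not} used, and that the bound goes through purely via the union bound. The arithmetic is tight: the constants $1/16$ and $31/32$ are exactly what is needed to leave $1/32$ of probability for the desired joint event.
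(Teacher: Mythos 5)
Your proof is correct and is exactly the argument the paper has in mind: the paper's proof of Theorem~\ref{cor:main} is a one-liner invoking the union bound on Lemma~\ref{thm:good:1} and Lemma~\ref{thm:good:2}, which is precisely the complementary-event calculation you carried out. Your remark that independence is not needed is a helpful clarification but does not change the substance.
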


\begin{proof}
Follows from applying union bound on Lemma~\ref{thm:good:1} and Lemma~\ref{thm:good:2}.
    \end{proof}

\subsubsection{Main Analysis}
\label{subsec:analysis}

Under  the monotone price vector $\mathbf{P}_A$, let $A_l \subseteq A$ denote the subset of  bidders   who take the item at the level-$l$-price $q_l$. 
$$A_l = \{ i \in A : b_i \geq \mathbf{P}_A(i), \text{ and } \mathbf{P}_A(i) = q_l\}.$$
Note that  $A_l$ respects the  ordering of the set of  bidders $\I = \{1, \ldots, n\}$. Specifically, if we have three bidders $i < j < k$  such that  (a) $i,j,k \in A$, (b) each of them has a valuation at least $q_l$, and (c)  $i,  k \in A_l$; then we must have that $j \in A_l$.

Let  $\text{{\sc Rev}}(\p_A,l)$ denote the contribution towards   $\rv(\p_A)$  by the bidders in $A_l$. Accordingly, we get:  
$$\text{{\sc Rev}}(\p_A,l) = |A_l| \times q_l, \text { and } \rv(\p_A) = \sum_{l \geq 0} \text{{\sc Rev}}(\p_A,l).$$

\begin{define}
\label{def:goodlevel}
A level $l$ is {\em good} w.r.t. $\mathbf{P}_A$ iff $l \geq 24$ and $|A_l| \geq 288l$; otherwise the level  is {\em bad}. 
\end{define}

 The next claim implies that  all the bad levels $l \geq 24$ contribute relatively little towards $\text{{\sc Rev}}(\p_A)$.

 \begin{claim}
 \label{cl:badlevel}
We have:
 $$\sum_{l \geq 24 \ : \ l \text{ is bad }} \text{{\sc Rev}}(\p_A,l) \leq \frac{1}{18}\times \M_{\I}.$$
   \end{claim}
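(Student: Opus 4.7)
My plan is to bound each bad-level contribution separately and sum a tail of a geometric-times-linear series.

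First I would use the two definitions. By Definition~\ref{def:goodlevel}, a level $l \geq 24$ is bad iff $|A_l| < 288 l$. By Definition~\ref{def:label}, the level-$l$-price satisfies $q_l \leq \mathcal{M}_\mathcal{I}/2^l$. Combining these, for every bad level $l \geq 24$,
\[
\text{{\sc Rev}}(\mathbf{P}_A, l) \;=\; |A_l| \cdot q_l \;<\; 288 l \cdot \frac{\mathcal{M}_\mathcal{I}}{2^l}.
\]

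Next I would extend the sum over \emph{all} $l \geq 24$ (dropping the ``bad'' restriction only loosens the bound) and evaluate a standard tail:
\[
\sum_{l \geq 24 :\, l \text{ bad}} \text{{\sc Rev}}(\mathbf{P}_A, l) \;\leq\; 288 \, \mathcal{M}_\mathcal{I} \cdot \sum_{l \geq 24} \frac{l}{2^l}.
\]
Using the identity $\sum_{l \geq N} l x^l = N x^N/(1-x) + x^{N+1}/(1-x)^2$ at $x = 1/2$, which yields $\sum_{l \geq N} l/2^l = (N+1)/2^{N-1}$, I get $\sum_{l \geq 24} l/2^l = 25/2^{23}$. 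Thus the total is at most $288 \cdot 25 / 2^{23} \cdot \mathcal{M}_\mathcal{I} = 7200/2^{23} \cdot \mathcal{M}_\mathcal{I}$, and a routine check shows $7200 \cdot 18 \leq 2^{23}$, so the bound is comfortably below $\mathcal{M}_\mathcal{I}/18$.

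There is no real obstacle here; the only thing to be careful about is remembering that the sum $\sum_{l \geq 24} l/2^l$ has closed form $25/2^{23}$ rather than handwaving it as ``geometrically small,'' since the constant $288$ out front could in principle be large enough to break the desired inequality. Once one writes down the exact value, the slack is huge (roughly $10^{-6}$ versus $1/5184$), so the $1/18$ in the claim is far from tight.
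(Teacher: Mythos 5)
Your proof is correct and follows essentially the same route as the paper: bound each bad-level term by $288l\cdot q_l < 288l\,\mathcal{M}_{\mathcal{I}}/2^l$ and sum the tail over $l\ge 24$. The only difference is that you evaluate $\sum_{l\ge 24} l/2^l = 25/2^{23}$ exactly, which is actually cleaner than the paper's intermediate termwise bound $288l/2^l \le 2^{-l/2}$ (that inequality in fact fails at $l=24,25$, though the paper's final conclusion survives because the slack against $1/18$ is enormous).
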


 \begin{proof}
 Consider any level $l \geq 24$ that is bad. Since $|A_l| < 288l$ and the level-$l$-price in $\mathbf{P}_A$ is at most $\M_{\I}/2^{l}$, we have:
 $$\rv(\p_A,l) = |A_l| \times q_l <  \frac{288l}{2^l} \times \M_{\I}.$$ Summing over all such levels, we get
\begin{eqnarray*}
\sum_{l \geq 24 \, : \, l \text{ is bad}} \rv(\p_A,l) & \leq & \sum_{l \geq 24} \frac{288l}{2^l} \times \M_{\I} \\
& \leq &  \sum_{l \geq 24} \frac{1}{2^{l/2}} \times \M_{\I} \\
& \leq & \frac{1}{18} \times \M_{\I}.
\end{eqnarray*}
  \end{proof}
 
Now, we are ready to prove the revenue guarantee.

\begin{theorem}
\label{thm:} The expected revenue of the Hybrid Auction (see Figure~\ref{fig:auction}) is within a constant factor of the monotone-price benchmark $\M_{\I}$.
\end{theorem}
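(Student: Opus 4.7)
The plan is to carry out a case analysis on the ratio $\F_{\I}/\M_{\I}$, as sketched in the paper. Fix a sufficiently small constant $c > 0$ (I expect $c = 1/432$ to suffice). If $\F_{\I} \geq c \cdot \M_{\I}$, then branch~1 of the Hybrid Auction (which is $3.12$-competitive against $\F_{\I}$) fires with probability $1/2$ and already yields expected revenue at least $(1/2) \cdot \F_{\I}/3.12 = \Omega(\M_{\I})$. The real work is in the complementary regime $\F_{\I} < c \cdot \M_{\I}$, where I would show that $\rv(\p_B) \geq \Omega(\M_{\I})$ whenever the event $\e_1(\B) \cap \e_2(\B)$ holds. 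Since branch~2 runs with probability $1/2$ and $\pr[\e_1 \cap \e_2] \geq 1/32$ by Theorem~\ref{cor:main}, this yields expected revenue $\Omega(\M_{\I})$ from branch~2 in the other regime.

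To lower bound $\rv(\p_B)$ on $\e_1 \cap \e_2$, I would start from the decomposition $\rv(\p_A) = \sum_{l} \rv(\p_A, l)$ and split it into good versus bad levels. Bad levels with $l \geq 24$ together contribute at most $\M_{\I}/18$ by Claim~\ref{cl:badlevel}. For bad levels with $l < 24$ (at most $24$ of them), I would argue that each contributes at most $\F_{\I}$: monotonicity of $\p_A$ bounds its largest price by $b^{(2)}_A \leq b^{(2)}_{\I}$, so every $q_l$ it uses is a feasible uniform price on $\I$, and pricing all of $\I$ uniformly at $q_l$ yields revenue at least $|A_l| \cdot q_l = \rv(\p_A, l)$, which is therefore $\leq \F_{\I}$. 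Combining these, the total bad-level contribution is at most $24 \F_{\I} + \M_{\I}/18$, and since $\e_1$ gives $\rv(\p_A) \geq \M_{\I}/6$, the good-level contribution is at least $\M_{\I}/6 - 24 \F_{\I} - \M_{\I}/18$.

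The main new step is to show $\rv(\p_B, l) \geq \rv(\p_A, l)/3$ for every good level $l$. Write $A_l = \{i_1 < \cdots < i_k\}$ and consider the triple $T_l = (i_1, i_k, q_l)$. Because $|W_{T_l}| \geq |A_l| \geq 288\,l$, $T_l$ is a large level-$l$-triple, so $\e_2$ forces it to be balanced, yielding $|B \cap W_{T_l}| \geq |W_{T_l}|/3 \geq |A_l|/3$. For any $i \in B \cap W_{T_l}$ we have $i_1 < i < i_k$ (since $i \in B$ excludes $i \in \{i_1, i_k\} \subseteq A$), so the predecessor $j^*$ of $i$ in $A$ satisfies $i_1 \leq j^* \leq i_k$. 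Monotonicity of $\p_A$ then gives $q_l = \p_A(i_1) \leq \p_A(j^*) \leq \p_A(i_k) = q_l$, hence $\p_B(i) = \p_A(j^*) = q_l$; since $b_i \geq q_l$, bidder $i$ accepts and contributes $q_l$ to $\rv(\p_B, l)$. Therefore $\rv(\p_B, l) \geq |B \cap W_{T_l}| \cdot q_l \geq \rv(\p_A, l)/3$, and summing over good levels yields $\rv(\p_B) \geq \tfrac{1}{3}(\M_{\I}/6 - 24 \F_{\I} - \M_{\I}/18)$.

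Choosing $c$ small enough that $24 \F_{\I} \leq \M_{\I}/18$ (say $c = 1/432$) leaves $\rv(\p_B) \geq \M_{\I}/54$ on $\e_1 \cap \e_2$, and taking the worse of the two cases gives expected revenue $\Omega(\M_{\I})$ overall. I expect the main obstacle to be the extension argument for good levels: one must carefully pin the predecessor-in-$A$ of every unsampled winning bidder to $A^{(l)}$ via monotonicity of $\p_A$, so that its extension price is exactly $q_l$ rather than some smaller level-$l'$-price. Once this is in hand, the remaining pieces (Claim~\ref{cl:badlevel}, Theorem~\ref{cor:main}, and the uniform-pricing bound for low levels) combine mechanically.
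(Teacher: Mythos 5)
Your proposal is correct and follows essentially the same route as the paper's proof: the same $432$ threshold for the case split, the same decomposition of $\rv(\p_A)$ into $l<24$, bad $l\geq 24$, and good levels, the same use of $\e_1 \cap \e_2$ and the per-level triple argument with extension prices. The only cosmetic differences are that you bound the $l<24$ contribution directly by $24\F_\I$ (the paper uses the contrapositive of the same fact), and that you extract the factor $1/3$ from the balancedness condition $|B \cap W| \geq |W|/3$ rather than the slightly sharper $|B \cap W| \geq \tfrac12 |A \cap W|$, so your final constant is a bit worse but still constant.
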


\begin{proof}
We shall consider two  mutually exclusive and exhaustive cases, and show that in the first (resp. second) case, the expected revenue of our auction is at least $1/2700$ (resp. $1/2304$) times the benchmark $\M_{\I}$.

\paragraph{Case 1} $432 \times \F_{\I} \geq \M_{\I}$. 

With probability $1/2$, we execute the auction of Ichiba et al.~\cite{ichiba}, which is  $3.12$ competitive against the benchmark $\F_{\I}$. Hence, the expected revenue of our auction scheme is at least $\F_{\I}/(2\times 3.12)$, which in turn, is lower bounded by $\M_{\I}/2700$.

\paragraph{Case 2} $432 \times \F_{\I} < \M_{\I}$. 

We first claim:
\begin{equation}
\label{eq:firstlevels}
\sum_{l=0}^{23} \rv(\p_A,l) \leq \M_{\I}/18
\end{equation}
For the sake of contradiction, suppose that the above equation is not true. In that case, there is some level $l^* \in [0,23]$ with $\rv(\p_A,l^*) = |A_{l^*}| \times q_{l^*} > \M_{\I}/(18 \times 24)$. 

Now, consider the price vector $\mathbf{P}'_{\I}$ which offers the item at price $q_{l^*}$ to every bidder: We have $\mathbf{P}'_{\I}(i) = q_{l^*}$ for all $i \in \I$.  Since $\mathbf{P}_A$ is a monotone price vector  and $A_{l^*}$ is non-empty, there are at least two bidders in $A$ whose bids are lower bounded by  $q_{l^*}$.  Hence, the price vector $\mathbf{P}'_{\I}$ is uniform, and we get:  
$$\F_{\I} \geq \rv(\p'_{\I}) \geq \rv(\p_A,l^*) \geq \M_{\I}/(18 \times 24).$$ This contradicts our assumption that  $432 \times \F_{\I} < \M_{\I}$. Accordingly, we infer that equation~(\ref{eq:firstlevels}) must be true.

Next, note that conditioned on the event $\e_1(\B)$, we have 
\begin{equation}
\label{eq:condition}
\rv(\p_A) \geq \frac{1}{6} \M_{\I}
\end{equation}

From equations~(\ref{eq:firstlevels}),~(\ref{eq:condition}) and Claim~\ref{cl:badlevel}, it follows that
\begin{equation}
\label{eq:condition2}
\text{Under event  } \e_1(\B), \qquad \sum_{l \text{ is good}} \rv(\p_A,l) \geq \frac{1}{18} \M_{\I} 
\end{equation}

For the rest of the proof,  condition on the event $\e_1(\B) \cap \e_2(\B)$.

Consider any good level $l$. Let the first (resp. last) bidder in $A_l$ be denoted by $i$ (resp. $j$): For all $k \in A_l$, we have $i \leq k \leq j$. Since  $\mathbf{P}_A(i) = \mathbf{P}_A(j) = q_l$ and $b_i, b_j \geq q_l$, we infer that $(i,j,q_l)$ is a level-$l$-triple. The  number of winning bidders in this triple is at least $|A_l|$. Since the level $l$ is good, we have $|A_l| \geq 288l$. Thus, we conclude that the triple $(i,j,q_l)$ is large. Furthermore, since the level $l$ is good, we have $l \geq 24$.

By definition,  the triple $(i,j,q_l)$ is balanced under the event $\e_1(\B) \cap \e_2(\B)$. In other words, at least one third of its winning bidders  are assigned to the set $B$. Thus, we get:
$$|W_{ijq_l} \cap B| \geq \frac{1}{2} |W_{ijq_l} \cap A| = \frac{1}{2} |A_l|.$$
Furthermore, since the price vector $\p_B$ is the extension of $\p_A$,  we have $\mathbf{P}_B(k) = q_l$ for all bidders $k \in W_{ijq_l} \cap B$. Hence, the contribution towards $\text{{\sc Rev}}(\p_{B})$ by such bidders is at least $\text{{\sc Rev}}(\p_A,l)/2$. 
Summing over all good levels, we conclude:
$$\text{{\sc Rev}}(\p_B) \geq \frac{1}{2} \times \sum_{l \text{ is good}} \text{{\sc Rev}}(\p_A,l) \geq \frac{1}{36} \times \M_{\I}.$$
The second inequality follows from equation~(\ref{eq:condition2}).

To summarize,   the event  $\e_1(\B) \cap \e_2(\B)$ occurs with probability at least $1/32$ (see Theorem~\ref{cor:main}), and  conditioned on this event, we have $\text{{\sc Rev}}(\p_B) \geq \M_{\I}/36$. Finally,  we execute this scheme with probability $1/2$. Putting all these observations together,  the expected revenue of our auction is at least
$$\frac{1}{2} \times \frac{1}{32} \times \frac{\M_{\I}}{36} = \frac{\M_{\I}}{2304}.$$ 
    \end{proof}

\bibliographystyle{plain}



\end{document}